\newtheorem{theorem}{Theorem}
\newtheorem{lemma}{Lemma}
\newtheorem{corollary}{Corollary}
\theoremstyle{definition}
\titlespacing{\section}{0mm}{1.5ex plus .1ex minus .2ex}{5pt}
\titleformat{\section}[runin]{\itshape}{\thesection}{.5em}{}[.]
\newcommand*{\cW}{\mathcal{W}}
\newcommand*{\cX}{\mathcal{X}}
\newcommand*{\cY}{\mathcal{Y}}
\newcommand*{\cZ}{\mathcal{Z}}
\newcommand{\ket}[1]{|#1\rangle}
\newcommand{\bra}[1]{\langle #1 |}
\newcommand{\unidim}[2]{\ket{#1}\bra{#2}}
\newcommand{\proj}[1]{\ket{#1}\bra{#1}}
\newcommand{\ot}[0]{\otimes}
\newcommand{\beq}{\begin{equation}}
\newcommand{\eeq}{\end{equation}}
\newcommand{\best}{\begin{equation*}}
\newcommand{\eest}{\end{equation*}}
\newcommand{\sep}{{\textrm{sep}}}
\newcommand{\Sep}{{\rm Sep}}
\newcommand{\tp}{{\rm TP}}
\newcommand{\ta}{{\rm TA}}
\DeclareMathOperator{\Tr}{Tr}
\newcommand{\op}[1]{\operatorname{#1}}
\newcommand{\setft}[1]{\mathrm{#1}}
\newcommand{\lin}[1]{\setft{L}\left(#1\right)}
\newcommand{\density}[1]{\setft{D}\left(#1\right)}
\newcommand{\trans}[1]{\setft{T}\left(#1\right)}
\def\I{\mathbb{1}}
\def\complex{\mathbb{C}}
\newenvironment{mylist}[1]{\begin{list}{}{
	\setlength{\leftmargin}{#1}
	\setlength{\rightmargin}{0mm}
	\setlength{\labelsep}{2mm}
	\setlength{\labelwidth}{8mm}
	\setlength{\itemsep}{0mm}}}
	{\end{list}}
\begin{document}

\title{All entangled states are useful for channel discrimination}

\author{Marco Piani}
\affiliation{Institute for Quantum Computing \& Department of Physics
  and Astronomy, University of Waterloo, 200 University Avenue West, 
  Waterloo, Ontario N2L 3G1, Canada}

\author{John Watrous}
\affiliation{Institute for Quantum Computing \& School of Computer
  Science,
  University of Waterloo, 200 University Avenue West, 
  Waterloo, Ontario N2L 3G1, Canada}

\pacs{03.67.Mn, 03.67.Bg, 03.65.Ud}

\begin{abstract}
  We prove that every entangled state is useful as a resource for the
  problem of minimum-error channel discrimination.
  More specifically, given a single copy of an arbitrary bipartite
  entangled state, it holds that there is an instance of a quantum
  channel discrimination task for which this state allows for a
  correct discrimination with strictly higher probability than 
  every separable state.
\end{abstract}

\maketitle

Despite its sometimes counter-intuitive properties, entanglement has
firmly been established as a fundamental resource at the core of
quantum information theory.
Universal quantum computation is generally believed to be impossible
in its absence~\cite{JoszaL03}, and it plays a principal role in
quantum teleportation~\cite{BennettBCJPW93}, superdense
coding~\cite{BennettW92}, and the one-way model of quantum
computation~\cite{RaussendorfB01}.
The classification of entanglement
%into different types, depending on
with respect to its usefulness and properties as a resource is a major focus in the
theory of quantum information.
For example, {\it distillable entanglement}~\cite{BennettBPSSW96} may
be processed by means of local operations and classical communication
into a nearly pure form that is suitable for high fidelity quantum
teleportation, while {\it bound entanglement} cannot~\cite{HorodeckiHH98}.
Other classifications of entangled states, such as those that allow
or do not allow superdense coding~\cite{BrussDLMSS04,HododeckiP07},
and those from which private shared-randomness can be extracted
\cite{HorodeckiHHO05}, have also been studied.

Although entanglement is known to be useful in several quantum
information-theoretic settings, there are very few known results that
establish the usefulness of \emph{every} entangled state, irrespective
of the ``quality'' of its entanglement and of the dimensionality of
its underlying systems.
The only prior examples that we are aware of involve a type of
activation mechanism, where the usefulness of a given entangled state
is based on
%the joint properties of a composite system when it is
its pairing with another entangled state.
%of a special type.
For example, in~\cite{Masanes05} it was proved that for any entangled
state, there exist another entangled state such that the fidelity of
conclusive teleportation~\cite{HorodeckiHH99} of the latter is enhanced
by the presence of the former. 
A different property holding for all entangled states that has a
similar character was proved in \cite{MasanesLD08}.

In this Letter we demonstrate a new way in which every entangled state
is useful as a resource: for the task of {\it channel discrimination}.
In this task, two known discrete physical processes (or channels)
are fixed, and access to one of them is made available---but it is not
known which one it is, and only a single application of the channel is
possible.
The goal is to determine, with minimal probability of error, which of
the two channels was given, assuming for simplicity that the two
channels were equally likely.
The most general approach to solving an instance of this problem is to
prepare a (possibly entangled) bipartite \emph{probe/ancilla} quantum state, to apply the
given channel to one part of this state---the probe---and finally to measure the
resulting bipartite state by a POVM with two outcomes that correspond to
predictions of which channel was given.

It is well-known that probe-ancilla entanglement is sometimes useful for channel
discrimination.
This phenomenon seems to have been identified first by Kitaev
\cite{Kitaev97}, who introduced the {\it diamond norm} on
super-operators to deal with precisely this phenomenon in the context
of quantum error correction and fault-tolerance
\footnote{
	The diamond norm turns out to be essentially equivalent to a norm
	known as the {\it completely bounded norm}, which is an object of
	study in the theory of operator algebras \cite{Paulsen02}.
}.
Subsequent work 
\cite{
  ChildsPR00,
  D'ArianoPP01,
  Acin01,
  GiovannettiLM04,
  GilchristLN05,
  RosgenW05,
  Sacchi05,
  Sacchi05b,
  Lloyd08,
  Rosgen08,
  Watrous08}
by several researchers further illuminated the usefulness of
entanglement in the problem of channel discrimination and related
tasks.
In these works, the focus has mainly been on identifying classes of
channel pairs for which some optimally chosen entangled state either
does or does not give an advantage over every possible separable (or
nonentangled) state.

In this Letter we reverse this question and suppose that some
{\it arbitrary} entangled state is given, and ask whether the
entanglement in this state is useful for channel discrimination.
We prove that every bipartite entangled state indeed does provide an
advantage for this task: \emph{there necessarily exists an instance
of a channel discrimination problem for which the entangled state
allows for a correct discrimination with strictly higher probability
than every possible separable state}.
This holds even for a single copy of the entangled state, regardless
of its dimensionality or the quality or type of its entanglement
(including, for instance, bound entangled states), and does not
require the presence of an auxiliary state.
%that serves to activate it.
This fact is proved below after brief discussions of notation,
terminology, and background information on the problem of channel
discrimination.

%--------------------------------------------------------------------------%
\section{Notation and terminology}
%--------------------------------------------------------------------------%

For a given (finite dimensional) Hilbert space $\cX$, the set of
linear operators taking the form $A:\cX\rightarrow\cX$ is denoted by
$\lin{\cX}$. We will denote by $\I_\cZ$ the identity operator on $\cZ$ and by $\I_{\lin{\cZ}}$
the identity super-operator on $\lin{\cZ}$. 
An operator $\rho\in\lin{\cX}$ is a {\it density operator}, and
represents a {\it state}, if it is positive semidefinite
($\rho\geq0$) and has unit trace ($\Tr(\rho)=1$).
The set of such density operators is denoted $\density{\cX}$.
A state $\sigma\in\density{\cX\otimes\cZ}$ of a bipartite system
is said to be {\it separable} if it takes the form
%\begin{equation} \label{eq:separable}
$\sigma=\sigma^{\rm sep} = \sum_i{p_i}\, \sigma^i_\cX\otimes\sigma^i_\cZ$
%\end{equation}
for density operators $\{\sigma^i_\cX\}$ and $\{\sigma^i_\cZ\}$ on the
Hilbert spaces $\cX$ and $\cZ$, respectively, and $\{p_i\}$ a probability distribution, and otherwise is
{\it entangled}.
The set of all separable states %of the above form \eqref{eq:separable}
is denoted $\Sep(\cX:\cZ)$.
The {\it trace norm} of an operator $A$ is defined as
$\|A\|_{\rm tr}\equiv\Tr\sqrt{A^\dagger A}$
\footnote{
  When $A=\sum_i a_i \proj{i}$ is Hermitian, the trace norm coincides
  with the sum of the absolute values of the eigenvalues
  $\|A\|_{\rm tr}=\sum_i|a_i|$.
}.
The {\it trace distance} between two states $\rho_0$ and $\rho_1$ is
$\|\rho_0-\rho_1\|_{\rm tr}$.

Channels are particular elements of the set of linear super-operators
$\trans{\cX,\cY}\equiv\{\Phi|\Phi:\lin{\cX} \rightarrow \lin{\cY}\}$ that map
operators on a Hilbert space $\cX$ into operators on a (possibly
different) Hilbert space $\cY$. 
A super-operator $\Phi \in \trans{\cX,\cY}$ is said to be:
\begin{mylist}{5mm}
\item[$\bullet$]
  \emph{Hermiticity-preserving}
%if $\Phi[X]$ is Hermitian for every
%  Hermitian operator $X$; or equivalently
  if $\Phi[X]^\dagger=\Phi[X^{\dagger}]$, $\forall X\in\lin{\cX}$;
\item[$\bullet$]
  \emph{trace-preserving} if $\Tr(\Phi[X])=\Tr(X)$, $\forall X\in\lin{\cX}$;
\item[$\bullet$]
  \emph{trace-annihilating} if $\Tr(\Phi[X])=0$, $\forall X\in\lin{\cX}$;
\item[$\bullet$]
  \emph{positive} if $\Phi[X]\geq 0$ for every positive semidefinite
  operator $\lin{\cX}\ni X\geq 0$;
\item[$\bullet$]
  \emph{completely positive} if $\Phi\otimes\I_{\lin{\mathbb{C}^n}}$
  is positive for all $n$;
\item[$\bullet$]
  a \emph{channel} if it is both completely positive and
  trace-preserving;
\item[$\bullet$] an \emph{entanglement-breaking channel} if it is a
  channel that destroys all entanglement:
  $\left(\Phi\otimes\I_{\lin{\cZ}}\right)[\rho_{\cX\cZ}]\in\Sep(\cY:\cZ)$
  for all states $\rho_{\cX\cZ}$.
\end{mylist}
A channel describes any physical process which preserves probability,
i.e., that happens with certainty.

The {\it Choi-Jamio{\l}kowski representation}
\cite{Jamiolkowski72,Choi75} of a super-operator
$\Phi\in\trans{\cX,\cY}$ is given by
\[
J(\Phi) = \sum_{1\leq i,j\leq d_{\cX}}
\Phi[\unidim{i}{j}] \otimes \unidim{i}{j}
\,\in\,\lin{\cY\otimes\cX},
\] 
where $d_{\cX}$ and $\{\ket{1},\ldots,\ket{d_{\cX}}\}$ are the dimension
and a fixed orthonormal basis of $\cX$, respectively. 
The mapping $J:\trans{\cX,\cY}\rightarrow\lin{\cY\otimes\cX}$ is a
linear bijection, which implies that for every operator
$A\in\lin{\cY\otimes\cX}$ there exists a unique super-operator
$\Phi\in\trans{\cX,\cY}$ such that $J(\Phi) = A$.
It holds that a super-operator $\Phi\in\trans{\cX,\cY}$ is: 
\begin{mylist}{5mm}
\item[$\bullet$] \emph{Hermiticity-preserving} if and only if
  $J(\Phi)^\dagger=J(\Phi)$ \cite{dePillis67};
\item[$\bullet$] \emph{trace-preserving} if and only if
  $\Tr_\cY(J(\Phi))=\I_{\cX}$; 
\item[$\bullet$] \emph{trace-annihilating} if and only if
  $\Tr_\cY(J(\Phi))=0$;
\item[$\bullet$] \emph{completely positive} if and only if
  $J(\Phi)\geq 0$ \cite{Jamiolkowski72,Choi75};
\item[$\bullet$] an \emph{entanglement-breaking channel} if and only if it
  is a channel and $J(\Phi)/d_{\cX} \in \Sep(\cY\otimes\cX)$
  \cite{HorodeckiSR03}.
\end{mylist}
%(The second and third properties above are straightforward.)

%--------------------------------------------------------------------------%
\section{State and channel discrimination}
%--------------------------------------------------------------------------%

The task of channel discrimination is naturally related to the
well-studied task of discriminating states \cite{Helstrom69}.
Suppose we are given one of two known states $\rho_0,\rho_1\in D(\cX)$, each with equal \emph{a priori} probability,
and our goal is to guess which one it is with minimal error
probability.
A guessing procedure for this task may be described by a two-outcome
POVM $\{M_0,M_1\}\subset \lin{\cX}$, $M_0,M_1\geq 0$, $M_0+M_1=\I_{\cX}$.
The error probability for such a measurement can be expressed as
$p_E=1/2(1-1/2\Tr[(M_0-M_1)(\rho_0-\rho_1)])$.
%as
%quantum states are not perfectly distinguishable in general, this
%probability of error
It may be nonzero for every possible measurement, but by optimizing the measurement one reaches the
minimum error probability $p^{\rm min}_E=1/2(1-1/2\|\rho_0-\rho_1\|_{\rm tr})$
\footnote{
  This is achieved by choosing $M_0$ and $M_1$ to be the projectors on
  the positive and negative subspaces of $\rho_0-\rho_1$.
}.

Now, suppose we want to discriminate two channels 
$\Phi_0,\Phi_1\in\trans{\cX,\cY}$ with minimal error probability, as
discussed above.
By ``probing'' whichever channel was given with a state
$\rho\in\density{\cX}$, we transform the problem into one of
discriminating between the states $\Phi_0[\rho]$ and $\Phi_1[\rho]$.
Thus, the relevant quantity becomes
$\|\Phi_0[\rho]-\Phi_1[\rho]\|_{\rm tr}$, and the minimal error will
be achieved by choosing an optimal input state that minimizes
this quantity.
In this way we are led to consider the trace distance
\footnote{This norm is different from the super-operator
  norm that is induced by the trace norm, which is sometimes also
  denoted $\|\cdot\|_{\mathrm{tr}}$.}
of two channels
$\|\Phi_0-\Phi_1\|_{\rm tr}\equiv\max_{\rho}
\|\Phi_0[\rho]-\Phi_1[\rho]\|_{\rm tr}$.
By the convexity of the trace norm, this maximum will be achieved for
some pure input state.

As mentioned previously, however, the reduction from channel to state
discrimination just described may not always be optimal, for it does
not exploit the possibility of feeding the channel with a subsystem
of a larger correlated system, and then measuring the resulting output joint system.
More precisely, we may consider an input state 
$\rho\in\density{\cX\otimes\cZ}$, with $\cZ$ the Hilbert
space of an arbitrary ancillary system, and compare the output states 
$\left(\Phi_i\otimes\I_{\lin{\cZ}}\right)[\rho]$, for $i = 0,1$.
Thus, the ultimate quantity relevant in minimal-error channel
discrimination is actually the diamond norm~\footnote{The supremum is always achieved for $n\leq \op{dim}(\cX)$.}:
\[
\|\Phi_0-\Phi_1\|_{\diamond}\equiv \sup_{n\geq 1}\|\Phi_0\otimes
\I_{\lin{\mathbb{C}^n}}-\Phi_1\otimes\I_{\lin{\mathbb{C}^n}}\|_{\rm tr}.
\]
By definition, it holds that
$\|\Phi_0-\Phi_1\|_{\diamond}\geq\|\Phi_0-\Phi_1\|_{\rm tr}$, and if
it is the case that 
$\|\Phi_0-\Phi_1\|_{\diamond}>\|\Phi_0-\Phi_1\|_{\rm tr}$, 
then it is necessarily because of entanglement.
%This is due to the fact that
Indeed, the correlations of separable states
never help in the discrimination of channels, as for every separable state $\sigma^{\rm sep}\in \Sep(\cX\otimes\cZ)$ we have:
\begin{multline*}
\left\|
\left(\Phi_0\otimes\I_{\lin{\cZ}}\right)[\sigma^{\rm sep}]-\left(\Phi_1\otimes\I_{\lin{\cZ}}\right)[\sigma^{\rm sep}]
\right\|_{\mathrm{tr}}\\
\begin{aligned}
& \leq\;\sum_i p_i 
\|(\Phi_0-\Phi_1)[\sigma^i_\cX]\otimes\sigma^i_\cZ\|_{\rm tr}\\
&= \;\sum_i p_i \|\Phi_0[\sigma^i_\cX]-\Phi_1[\sigma^i_\cX]\|_{\rm tr}
\;\leq\; \|\Phi_0-\Phi_1\|_{\rm tr}.
\end{aligned}
\end{multline*}
% \begin{multline*}
% \left\|
% \left(\Phi_0\otimes\I_{\lin{\cZ}}\right)[\sigma^{\rm sep}]
% - \left(\Phi_1\otimes\I_{\lin{\cZ}}\right)[\sigma^{\rm sep}]
% \right\|_{\mathrm{tr}}\\
% \leq\;\sum_i p_i 
% \|\Phi_0[\sigma^i_\cX]\otimes\sigma^i_\cZ-
% \Phi_1[\sigma^i_\cX]\otimes\sigma^i_\cZ\|_{\rm tr}\\
% = \;\sum_i p_i \|\Phi_0[\sigma^i_\cX]-\Phi_1[\sigma^i_\cX]\|_{\rm tr}
% \;\leq\; \|\Phi_0-\Phi_1\|_{\rm tr}
% \end{multline*}

%--------------------------------------------------------------------------%
\section{Proof of the main result}
%--------------------------------------------------------------------------%

%We will now present a proof of our main result.
To establish our main result, we will connect the characterization of
entanglement in terms of positive linear maps with its usefulness for
channel discrimination.

% We begin with the following lemma, which is a simplification of
% Lemma~1 in \cite{HorodeckiHH06}.
% This lemma states that the well-known characterization of entanglement
% by positive maps proved in \cite{HorodeckiHH96} continues to hold if
% an extra constraint is placed on the positive maps: that they preserve
% trace.
% In addition to a significantly simpler proof, we obtain a better bound
% on the dimension of the output space $\cY$ in comparison to
% \cite{HorodeckiHH06}.
We begin with a lemma that can be considered an improvement of
Lemma~1 in \cite{HorodeckiHH06}: the well-known characterization of entanglement
by positive maps proved in \cite{HorodeckiHH96} continues to hold if
the extra constraint of trace-preservation is placed on the positive maps.
The improvement of the following lemma lies in a significantly simpler proof and in a better bound on the
output dimension of the positive maps.

\begin{lemma}
  \label{lemma:TPdetection}
  A state $\rho\in\density{\cX\otimes\cZ}$ is entangled if and only if
  there exists a positive, trace-preserving super-operator
  $\Phi_{\rm TP} \in \trans{\cX,\cY}$ such that
  \begin{equation} \label{eq:entanglement-witness}
    \left(\Phi_\tp\otimes\I_{\lin{\cZ}}\right)[\rho] \ngeq 0.
  \end{equation}
  It suffices to take $\dim\cY\leq\dim\cZ+1$.
\end{lemma}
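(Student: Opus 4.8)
The plan is to prove the two implications separately. For the reverse direction, suppose such a $\Phi_\tp$ exists; I would argue the contrapositive, namely that every separable state is mapped to a positive operator, so any state violating \eqref{eq:entanglement-witness} must be entangled. This is precisely the computation already displayed in the excerpt: if $\rho=\sum_i p_i\,\sigma^i_\cX\otimes\sigma^i_\cZ$, then $(\Phi_\tp\otimes\I_{\lin\cZ})[\rho]=\sum_i p_i\,\Phi_\tp[\sigma^i_\cX]\otimes\sigma^i_\cZ\geq0$, since positivity of $\Phi_\tp$ gives $\Phi_\tp[\sigma^i_\cX]\geq0$ and $\sigma^i_\cZ\geq0$. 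Trace-preservation plays no role here, so this direction is immediate and in fact holds for any positive map.

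The content lies in the forward direction, where the approach is to start from a possibly non-trace-preserving, high-output-dimension positive witness and then massage it. First I would invoke the characterization of \cite{HorodeckiHH96}: since $\rho$ is entangled, there is some positive map $\Phi$ with output space $\cY'$ such that $(\Phi\otimes\I_{\lin\cZ})[\rho]\ngeq0$, equivalently a unit vector $\ket u\in\cY'\otimes\cZ$ with $\bra u(\Phi\otimes\I_{\lin\cZ})[\rho]\ket u<0$. The key dimensional observation is that the Schmidt rank of $\ket u$ across the $\cY':\cZ$ cut is at most $\dim\cZ$, so $\ket u$ is supported on the first factor in a subspace $\cY_0$ with $\dim\cY_0\leq\dim\cZ$. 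Letting $P$ be the projection onto $\cY_0$ and replacing $\Phi$ by $X\mapsto P\,\Phi[X]\,P$ keeps the map positive (a compression of a positive map), preserves the witnessing inequality because $(P\otimes\I_\cZ)\ket u=\ket u$, and confines the output to $\cY_0$.

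It remains to enforce trace-preservation, and this is the step I expect to be the crux: the compressed map $\Phi'$ satisfies only $\Tr(\Phi'[X])=\Tr(QX)$ for a fixed $Q\geq0$, which is generally not $\I_\cX$. The idea is to rescale so that $Q\leq\I_\cX$, dividing by the largest eigenvalue of $Q$ (nonzero, since a positive map with $Q=0$ would have to vanish and could not witness), and then to absorb the resulting trace defect into a single fresh output dimension. Concretely, adjoining one basis vector $\ket e$ orthogonal to $\cY_0$, so that $\dim\cY\leq\dim\cZ+1$, I would set
\[
\Phi_\tp[X]=\tilde\Phi[X]+\Tr\big((\I_\cX-\tilde Q)X\big)\,\proj e,
\]
where $\tilde\Phi$ and $\tilde Q$ are the rescaled map and operator. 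Positivity is clear because both summands are positive on inputs $X\geq0$; trace-preservation follows from $\Tr(\Phi_\tp[X])=\Tr(\tilde QX)+\Tr((\I_\cX-\tilde Q)X)=\Tr(X)$; and the correction term equals $\proj e\otimes\tau$ with $\tau=\Tr_\cX[((\I_\cX-\tilde Q)\otimes\I_\cZ)\rho]\geq0$, which is supported on $\ket e\otimes\cZ$ and hence orthogonal to $\ket u\in\cY_0\otimes\cZ$. Therefore $\bra u(\Phi_\tp\otimes\I_{\lin\cZ})[\rho]\ket u=\bra u(\tilde\Phi\otimes\I_{\lin\cZ})[\rho]\ket u<0$, so the witnessing property survives together with the dimension bound. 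The delicate point is exactly that the trace-correcting term must be steered into a direction orthogonal to the witness, which is what the single extra output dimension buys us.
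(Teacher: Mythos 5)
Your proposal is correct and essentially reproduces the paper's proof: both arguments rescale the Horodecki witness map so that it is trace non-increasing on positive inputs, and then absorb the trace defect $\Tr(X)-\Tr(\hat\Omega[X])$ into a single extra output dimension orthogonal to the original output space, which preserves positivity, enforces trace-preservation, and leaves the negative expectation value of the witness vector untouched. The only difference is your Schmidt-rank compression step bounding the output dimension, which the paper renders unnecessary by citing the theorem of Horodecki et al.\ in the form where the positive map already has output space $\cZ$ (so the bound $\dim\cY\leq\dim\cZ+1$ is immediate); your compression argument is a correct, self-contained way to recover the same bound from a weaker form of that theorem.
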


\begin{proof}
In \cite{HorodeckiHH96} it was proved that a state 
$\rho\in \density{\cX\otimes\cZ}$ is entangled if and only if there
exists a positive super-operator $\Omega \in \trans{\cX,\cZ}$ such
that $(\Omega\otimes\I_{\lin{\cZ}})[\rho]\ngeq 0$.
The main issue that must be addressed is that the super-operator
$\Omega$ may not, in general, be trace-preserving.

Let us define
$\lambda(\Omega)\equiv\max_{\rho}\Tr(\Omega[\rho])$,
%\footnote{
%  One immediately checks that
%  $\lambda_{\Tr}(\Omega)=\|\Omega^\dagger(\I_{\lin{\cY}})\|$,
%  where $\|A\|$ is the usual operator norm of $A$, 
%  and $\Omega^\dagger\in\trans{\cZ,\cX}$ is the adjoint super-operator
%  to $\Omega$.},
where the maximum is over all density operators
$\rho\in\density{\cX}$, and consider the normalized map
$\hat{\Omega}\equiv\Omega/\lambda(\Omega)$. 
By construction, this super-operator satisfies 
$\Tr(X)\geq \Tr(\hat{\Omega}[X])$ for all $X\geq 0$, and so the map
$\Phi_{\rm TP}\in \trans{\cX,\cZ\oplus\complex}$ defined as
$\Phi_{\rm TP} [X] \equiv 
\hat{\Omega}[X] + (\Tr(X)-\Tr(\hat{\Omega}[X]))\,\proj{0}$,
where $\ket{0}$ is a normalized vector orthogonal to $\cZ$, is also
positive and satisfies 
$(\Phi_{\rm TP}\otimes\I_{\lin{\cZ}})[\rho]\ngeq 0$. 
By taking $\cY = \cZ\oplus\complex$ and noticing that $\Phi_{\rm TP}$ is trace-preserving, 
the proof is complete.
\end{proof}

%\begin{remark}
%   For the sake of the proof of the main theorem below,
It is helpful
  to note at this point that any positive and trace-preserving super-operator $\Phi_\tp\in \trans{\cX,\cY}$ allows one to define, for every
  state $\rho\in\density{\cX\otimes\cZ}$, a generalized \emph{negativity}~\cite{ZyczkowskiHSL98,VidalW02} parameter as~\footnote{Note that the party on which the super-operator is applied is in general relevant.}
\[
N_{\Phi_\tp}(\rho)\equiv\frac{\|\left(\Phi_\tp\otimes\I_{\lin{\cZ}}\right)[\rho]\|_{\rm tr}-1}{2}=\sum_{i:r_i<0}|r_i|,
\]
where $\{r_i\}$ is the set of eigenvalues of $\left(\Phi_\tp\otimes\I_{\lin{\cZ}}\right)[\rho]$. Of course,
    $\left(\Phi_\tp\otimes\I_{\lin{\cZ}}\right)[\sigma^{\rm sep}] \geq 0$
    and $N_{\Phi_\tp}(\sigma^{\sep})=0$, for every
    separable state $\sigma^{\rm sep}\in\Sep(\cX:\cZ)$, while $N_{\Phi_\tp}(\rho)>0$ if $\rho$ is entangled and detected as in \eqref{eq:entanglement-witness}.
%\end{remark}

Next we will prove a lemma that relates a Hermiticity-preserving,
trace-annihilating super-operator---an apparently abstract object---to the existence of two channels.

\begin{lemma} \label{lem:TA}
  Let $\Phi_\ta \in \trans{\cX,\cY}$ be a Hermiticity preserving,
  trace-annihilating super-operator. 
  Then there exist channels $\Psi_0,\Psi_1\in\trans{\cX,\cY}$ and a
  scalar $c_{\Phi_\ta}>0$ such that $c_{\Phi_\ta}\Phi_\ta=\Psi_0-\Psi_1$. 
\end{lemma}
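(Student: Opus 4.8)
The plan is to work entirely in the Choi--Jamio{\l}kowski picture, translating the statement into a decomposition problem for Hermitian operators. First I would record what the hypotheses say about $A\equiv J(\Phi_\ta)\in\lin{\cY\otimes\cX}$: Hermiticity-preservation gives $A=A^\dagger$, and trace-annihilation gives $\Tr_\cY(A)=0$. Under the linear bijection $J$, the desired identity $c_{\Phi_\ta}\Phi_\ta=\Psi_0-\Psi_1$ becomes the requirement to find positive semidefinite operators $J(\Psi_0),J(\Psi_1)\geq 0$, each with $\Tr_\cY(J(\Psi_i))=\I_{\cX}$ (so that, by the Choi criteria listed above, $\Psi_0$ and $\Psi_1$ are channels), satisfying $c_{\Phi_\ta}A=J(\Psi_0)-J(\Psi_1)$.

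Next, since $A$ is Hermitian, I would take its spectral (Jordan) decomposition into positive and negative parts, $A=P-Q$ with $P,Q\geq 0$ having orthogonal supports. The trace-annihilation condition $\Tr_\cY(A)=0$ then forces $\Tr_\cY(P)=\Tr_\cY(Q)\equiv R\geq 0$, a single positive operator on $\cX$. This matching of the two partial traces is precisely what makes a symmetric construction possible.

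The core idea is to pad $c_{\Phi_\ta}P$ and $c_{\Phi_\ta}Q$ by a common positive operator $E\geq 0$, setting $J(\Psi_0)=c_{\Phi_\ta}P+E$ and $J(\Psi_1)=c_{\Phi_\ta}Q+E$. The difference is automatically $c_{\Phi_\ta}(P-Q)=c_{\Phi_\ta}A$ for any choice of $E$, and both padded operators share the partial trace $c_{\Phi_\ta}R+\Tr_\cY(E)$, so the trace-preservation condition reduces to the single requirement $\Tr_\cY(E)=\I_{\cX}-c_{\Phi_\ta}R$. I would satisfy this by taking $E=\frac{1}{\dim\cY}\,\I_\cY\otimes(\I_{\cX}-c_{\Phi_\ta}R)$, whose partial trace over $\cY$ is exactly $\I_{\cX}-c_{\Phi_\ta}R$.

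The one genuine constraint---and the only (mild) obstacle---is ensuring $E\geq 0$, i.e. $\I_{\cX}-c_{\Phi_\ta}R\geq 0$. Since $R$ is a fixed positive operator, this holds for every sufficiently small $c_{\Phi_\ta}>0$; concretely it suffices that $c_{\Phi_\ta}$ times the largest eigenvalue of $R$ be at most $1$ (and any $c_{\Phi_\ta}>0$ works when $R=0$). With such a scalar fixed, $J(\Psi_0)$ and $J(\Psi_1)$ are positive semidefinite with partial trace $\I_{\cX}$, so $\Psi_0$ and $\Psi_1$ are channels and $c_{\Phi_\ta}\Phi_\ta=\Psi_0-\Psi_1$, completing the argument.
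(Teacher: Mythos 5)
Your proposal is correct and follows essentially the same route as the paper's proof: pass to the Choi--Jamio{\l}kowski representation, take a Jordan decomposition $P-Q$ with matching partial traces, and pad both parts by a common positive operator whose partial trace is $\I_{\cX}-c\,\Tr_\cY(P)$, with $c$ chosen so this is positive semidefinite. The only differences are cosmetic: you pick the explicit padding $\frac{1}{\dim\cY}\,\I_\cY\otimes(\I_{\cX}-cR)$ where the paper allows any positive operator with the right partial trace (suggesting a purification), and you note the trivial $R=0$ case explicitly, which the paper's choice $c=1/\|Q\|$ silently glosses over.
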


\begin{proof}
  Given that $\Phi_\ta$ is Hermiticity-preserving and trace-annihilating,
  it holds that its Choi-Jamio{\l}kowski representation $J(\Phi_\ta)$ is
  Hermitian and satisfies $\Tr_\cY J(\Phi_\ta) = 0$.
  Let $J(\Phi_\ta) = P_0 - P_1$ be a Jordan decomposition of $J(\Phi_\ta)$
  (meaning that $P_0,P_1\geq 0$ and $\Tr(P_0 P_1)=0$), and note that
  $\Tr_\cY P_0 = \Tr_\cY P_1 =: Q \geq 0$.
  Take $c_{\Phi_\ta} = 1/\|Q\|$, so that $c_{\Phi_\ta} Q \leq \I_{\cX}$.
  Next, consider any positive operator $\xi\in L(\cY\otimes\cX)$ such that
  $\Tr_\cY\xi_{\cY\cX} = \I_{\lin{\cX}} - c_{\Phi_\ta} Q$
  \footnote{
    One possible canonical choice, uniquely defined up to an isometry on
    $\cY$, is a (non-normalized) purification
    $\xi=\proj{\xi}$, $\ket{\xi}\in \cY\otimes\cX$, of $\I_{\lin{\cX}} - c_{\Phi_\ta} Q$.
  }, and let $\Psi_0,\Psi_1\in\trans{\cX,\cY}$ be the unique
  super-operators for which $J(\Psi_i)=c_{\Phi_\ta} P_i + \xi$ for $i=0,1$. 
  We have $J(\Psi_i) \geq 0$ and 
  $\Tr_\cY(J(\Psi_i)) = c_\Phi Q+\I_{\cX}-c_{\Psi_\ta} Q = \I_{\cX}$,
  therefore $\Psi_0,\Psi_1$ are channels.
  Moreover, $J(\Psi_0)-J(\Psi_1) = c_{\Phi_\ta}(P_0-P_1)=c_{\Phi_\ta} J(\Phi_\ta)$, 
  therefore $\Psi_0 - \Psi_1 = c_{\Phi_\ta} \Phi_\ta$.
\end{proof}

We are now ready for the proof of the main theorem, which will rely on the careful definition of a trace-annihilating map---starting from a trace-preserving map as in Lemma~\ref{lemma:TPdetection}---and on the application of Lemma~\ref{lem:TA}.

\begin{theorem} \label{thm:main}
  A state $\rho\in\density{\cX\ot\cZ}$ is entangled if and only if
  there exist channels $\Psi_0,\Psi_1\in\trans{\cX,\cY}$ such that
  \[
  \left\|\left(\Psi_0\otimes\I_{\lin{\cZ}}\right)[\rho] -
  \left(\Psi_1\otimes\I_{\lin{\cZ}}\right)[\rho]\right\|_{\rm tr}
  > \|\Psi_0-\Psi_1\|_{\rm tr}.
  \]
  It suffices to take $\dim\cY\leq\dim\cZ+2$. 
\end{theorem}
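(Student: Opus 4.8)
The plan is to establish both directions of the equivalence, but the substantive content lies in the forward direction, since the converse follows immediately from the computation already displayed in the excerpt.

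\emph{The easy direction.} Suppose $\rho$ is separable. Then for \emph{any} two channels $\Psi_0,\Psi_1$, the chain of inequalities proved at the end of the ``State and channel discrimination'' section shows that
\[
\left\|\left(\Psi_0\ot\I_{\lin{\cZ}}\right)[\rho]-\left(\Psi_1\ot\I_{\lin{\cZ}}\right)[\rho]\right\|_{\rm tr}\leq\|\Psi_0-\Psi_1\|_{\rm tr}.
\]
Hence no channel pair can yield the strict inequality, and the contrapositive gives that a state satisfying the displayed strict inequality for some channel pair must be entangled.

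\emph{The hard direction.} Now assume $\rho$ is entangled; I must \emph{construct} a channel pair witnessing the strict inequality. The construction proceeds by composing the two lemmas. First, apply Lemma~\ref{lemma:TPdetection} to obtain a positive, trace-preserving map $\Phi_\tp\in\trans{\cX,\cY'}$ with $\dim\cY'\leq\dim\cZ+1$ and $\left(\Phi_\tp\ot\I_{\lin{\cZ}}\right)[\rho]\ngeq 0$, so that the negativity $N_{\Phi_\tp}(\rho)>0$. The issue is that Lemma~\ref{lem:TA} requires a \emph{trace-annihilating} map, whereas $\Phi_\tp$ is trace-\emph{preserving}. I would bridge this by subtracting a reference channel: fix any state $\tau\in\density{\cY'}$ and define $\Phi_\ta[X]\equiv\Phi_\tp[X]-\Tr(X)\,\tau$. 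This map is Hermiticity-preserving (being a difference of a positive and a completely positive map) and trace-annihilating, since $\Tr(\Phi_\ta[X])=\Tr(\Phi_\tp[X])-\Tr(X)=0$ by trace-preservation of $\Phi_\tp$. I expect the natural choice to be $\tau=\proj{0}$ with $\ket{0}$ the extra basis vector adjoined in Lemma~\ref{lemma:TPdetection}, which keeps the output dimension at $\dim\cY\leq\dim\cZ+1$ (or at most one larger, yielding the stated $\dim\cY\leq\dim\cZ+2$ bound).

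\emph{Extracting the strict inequality.} Feeding $\Phi_\ta$ into Lemma~\ref{lem:TA} produces channels $\Psi_0,\Psi_1$ and a constant $c>0$ with $c\,\Phi_\ta=\Psi_0-\Psi_1$. The key computation is then to compare the two sides of the theorem after rescaling by $c$. On the left, $c\left(\Phi_\ta\ot\I\right)[\rho]=\left(\Phi_\tp\ot\I\right)[\rho]-\tau\ot\rho_\cZ$, whose trace norm I can bound below using that the first term has negative eigenvalues contributing the negativity $N_{\Phi_\tp}(\rho)>0$, while the subtracted positive piece does not cancel them. On the right, $\|\Psi_0-\Psi_1\|_{\rm tr}=c\|\Phi_\ta\|_{\rm tr}$, which is the trace-distance of the channels \emph{without} the ancilla, controlled by how $\Phi_\tp$ acts on single-system inputs; crucially, since $\Phi_\tp$ is \emph{positive} (not merely positive on the relevant subsystem), $\Phi_\tp[\sigma]\geq 0$ for every input state $\sigma\in\density{\cX}$, so the ancilla-free output $\left(\Phi_\ta\ot\I\right)[\sigma]$ applied to product inputs stays nonnegative and its negativity vanishes. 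The main obstacle, and the heart of the proof, is making precise that positivity of $\Phi_\tp$ forces the right-hand (no-ancilla) quantity to detect \emph{no} negativity while the left-hand (with-ancilla) quantity detects strictly positive negativity; once both sides are expressed through $N_{\Phi_\tp}$ and the reference subtraction is shown not to wash out the gap, the strict inequality $N_{\Phi_\tp}(\rho)>0$ translates directly into the claimed strict inequality between trace norms.
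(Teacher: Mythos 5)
Your overall skeleton is the same as the paper's (Lemma~\ref{lemma:TPdetection} to get $\Phi_\tp$, subtract a rank-one reference to build a trace-annihilating map, invoke Lemma~\ref{lem:TA}, compare trace norms), but the step you defer as ``the main obstacle'' is exactly where the proof lives, and the way you propose to close it fails. The missing idea is that the subtracted reference $\proj{0}$ must be orthogonal to the \emph{entire output space} $\cW$ of $\Phi_\tp$ --- i.e.\ one must adjoin a genuinely new dimension, which is precisely why the bound is $\dim\cY\leq\dim\cZ+2$ rather than $\dim\cZ+1$. With $\ket{0}\perp\cW$, the two terms of
\[
\left(\Phi_{\rm TA}\ot\I_{\lin{\cZ}}\right)[\tau']=\left(\Phi_\tp\ot\I_{\lin{\cZ}}\right)[\tau']-\proj{0}\ot\Tr_\cX(\tau')
\]
have orthogonal supports, so the trace norm is exactly additive, and for \emph{every} state $\tau'$ one gets the identity $\|\left((\Psi_0-\Psi_1)\ot\I_{\lin{\cZ}}\right)[\tau']\|_{\rm tr}=c\left(1+\|\left(\Phi_\tp\ot\I_{\lin{\cZ}}\right)[\tau']\|_{\rm tr}\right)=2c\left(1+N_{\Phi_\tp}(\tau')\right)$. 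Both sides of the theorem then carry the same baseline $2c$: ancilla-free inputs (by positivity of $\Phi_\tp$) and separable inputs give exactly $2c$, so $\|\Psi_0-\Psi_1\|_{\rm tr}=2c$, while $\rho$ gives $2c(1+N_{\Phi_\tp}(\rho))$, and the advantage is exactly $2c\,N_{\Phi_\tp}(\rho)>0$.

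Your sketch misses this in two concrete ways. First, your preferred reference, $\tau=\proj{0}$ with $\ket{0}$ the vector already adjoined \emph{inside} $\cY'$ by Lemma~\ref{lemma:TPdetection}, is not orthogonal to the range of $\Phi_\tp$: that construction outputs a $\proj{0}$ component of weight $\Tr(X)-\Tr(\hat{\Omega}[X])$, nonzero whenever $\hat{\Omega}$ is strictly trace-decreasing. With that choice one finds $\Phi_{\rm TA}=\hat{\Omega}-\Tr(\hat{\Omega}[\cdot])\proj{0}$, and the exact computation gives $\|\left(\Phi_{\rm TA}\ot\I_{\lin{\cZ}}\right)[\rho]\|_{\rm tr}=2\left(t_\rho+N_{\hat{\Omega}}(\rho)\right)$ with $t_\rho=\Tr(\hat{\Omega}[\Tr_\cZ\rho])\leq 1$, against $\|\Phi_{\rm TA}\|_{\rm tr}=2$; the advantage then requires $t_\rho+N_{\hat{\Omega}}(\rho)>1$, which is simply not guaranteed when $t_\rho<1$. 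Second, even in the abstract your bounds do not close: ``the subtracted positive piece does not cancel the negativity'' yields only the lower bound $2c\,N_{\Phi_\tp}(\rho)$ on the left-hand side, while your claim that the right-hand (no-ancilla) side ``detects no negativity'' is false --- $\Phi_{\rm TA}[\sigma]=\Phi_\tp[\sigma]-\tau$ is a traceless nonzero Hermitian operator and so necessarily has negative eigenvalues; indeed $\|\Psi_0-\Psi_1\|_{\rm tr}=2c$, not something small. A lower bound of $2c\,N_{\Phi_\tp}(\rho)$ on the left against $2c$ on the right proves nothing when $N_{\Phi_\tp}(\rho)<1$. The strict inequality comes only from the exact additivity identity above, which the orthogonal choice of $\ket{0}$ is what makes available.
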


\begin{proof}
We have already argued that if $\rho$ allows, for some choice of
channels $\Psi_0,\Psi_1$, a discrimination better than the one
corresponding to $\|\Psi_0-\Psi_1\|_{\rm tr}$, then $\rho$ must be
entangled. 
On the other hand, if $\rho$ is entangled, then by
Lemma~\ref{lemma:TPdetection} there exists a positive,
trace-preserving super-operator $\Phi_\tp \in \trans{\cX,\cW}$ such that
%$\|(\Phi_\tp\otimes\I_{\lin{\cZ}})[\rho]\|_{\rm tr} =
%1+N_{\Phi_\tp}(\rho)>1$. 
$N_{\Phi_\tp}(\rho)>0$. 
Let us define a new map
$\Phi_{\rm TA}\in \trans{\cX,\cW\oplus\complex}$ as
$\Phi_{\rm TA}[X] \equiv \Phi_\tp[X] - \Tr(X)\proj{0}$,
where $\ket{0}$ is a normalized vector orthogonal to $\cW$. 
By construction, $\Phi_{\rm TA}$ is Hermiticity-preserving and
trace-annihilating. 
By Lemma \ref{lem:TA}, there exists a scalar $c_{\Phi_{\rm TA}}$ such
that $c_{\Phi_{\rm TA}}\Phi_{\rm TA}=\Psi_0-\Psi_1$ for two channels
$\Psi_0,\Psi_1\in \trans{\cX,\cW\oplus\complex}$.

Now, for a generic state $\tau\in \density{\cX\ot\cZ}$, one finds 
\begin{multline*}
\|
\left((\Psi_0-\Psi_1)\otimes\I_{\lin{\cZ}}\right)[\tau]\|_{\rm tr}
\hspace{-5cm}\\
\begin{aligned}
&=c_{\Phi_{\rm TA}}\|(\Phi_{\rm TA}\otimes\I_{\lin{\cZ}})[\tau]\|_{\rm tr}\\
&=c_{\Phi_{\rm TA}}\|(\Phi_\tp\otimes\I_{\lin{\cZ}})[\tau] -
\proj{0}\otimes \Tr_\cX(\tau)\|_{\rm tr}\\
&=c_{\Phi_{\rm TA}}(1+\|(\Phi_\tp\otimes\I_{\lin{\cZ}})[\tau]\|_{\rm tr})\\
&=2c_{\Phi_{\rm TA}}(1+N_{\Phi_\tp}(\tau)).
\end{aligned}
\end{multline*} 
For every separable state $\sigma^{\rm sep}\in\Sep(\cX:\cZ)$ we
obtain $\|((\Psi_0-\Psi_1)\otimes\I_{\lin{\cZ}})[\sigma^{\rm sep}]\|_{\rm tr}
= 2c_{\Phi_{\rm TA}}$,
and therefore $\|\Psi_0-\Psi_1\|_{\rm tr}=2c_{\Phi_{\rm TA}}$. 
Thus,
\begin{multline*}
\|\left((\Psi_0-\Psi_1)\otimes\I_{\lin{\cZ}}\right)[\rho]\|_{\rm tr}
- \|\Psi_0-\Psi_1\|_{\rm tr}\\
= 2c_{\Phi_{\rm TA}}N_{\Phi_\tp}(\rho)>0.
\end{multline*}
According to Lemma~\ref{lemma:TPdetection} it is sufficient to have
$\dim\cW\leq\dim\cZ+1$.
Taking $\cY=\cW\oplus\mathbb{C}$ shows that it is sufficient to have
$\dim\cY\leq\dim\cZ+2$, and completes the proof.
\end{proof}

In regard to the type of channels that allow entangled states to give
improved discrimination, one has the following interesting corollary.
\begin{corollary}
  A state $\rho\in\density{\cX\ot\cZ}$ is entangled if and only if
  there exist entanglement-breaking channels
  $\Psi_0,\Psi_1\in\trans{\cX,\cY}$ such that
  \[
  \left\|\left(\Psi_0\otimes\I_{\lin{\cZ}}\right)[\rho] -
  \left(\Psi_1\otimes\I_{\lin{\cZ}}\right)[\rho]\right\|_{\rm tr}
  > \|\Psi_0-\Psi_1\|_{\rm tr}.
  \]
\end{corollary}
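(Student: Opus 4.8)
The plan is to strengthen Theorem~\ref{thm:main} by showing that the channels $\Psi_0,\Psi_1$ produced there can be taken to be entanglement-breaking. The forward direction (entanglement-breaking channels giving an advantage implies $\rho$ is entangled) is already covered: it is a special case of the general argument, recalled at the start of the proof of Theorem~\ref{thm:main}, that no separable state ever outperforms the trace norm $\|\Psi_0-\Psi_1\|_{\rm tr}$ for arbitrary channels, hence certainly not for entanglement-breaking ones. So the entire content lies in the reverse direction, and I would simply revisit the construction in the proof of Theorem~\ref{thm:main} and check that entanglement-breaking-ness can be imposed at each stage.

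The key observation is that by the Choi-Jamio\l{}kowski characterization recalled in the notation section, a channel $\Psi_i$ is entanglement-breaking if and only if $J(\Psi_i)/d_{\cX}\in\Sep(\cY:\cX)$. In the proof of Lemma~\ref{lem:TA}, the channels are defined by $J(\Psi_i)=c_{\Phi_\ta}P_i+\xi$, where $P_0,P_1$ come from the Jordan decomposition of $J(\Phi_\ta)$ and $\xi\geq 0$ is \emph{any} positive operator satisfying the partial-trace constraint $\Tr_\cY\xi=\I_{\cX}-c_{\Phi_\ta}Q$. The plan is to exploit precisely this freedom in the choice of $\xi$. Rather than taking a purification, I would try to choose $\xi$ together with a possibly enlarged Jordan-type decomposition so that each $J(\Psi_i)$ becomes a separable operator. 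The main obstacle is that $P_0$ and $P_1$ are in general \emph{not} separable, so merely adding a separable $\xi$ does not suffice; one must instead pad $J(\Phi_\ta)$ by a large enough separable operator.

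Concretely, I would invoke the standard fact that adding a sufficiently large multiple of a suitable separable operator renders any Hermitian operator separable: for the positive semidefinite operators $P_0,P_1$ on $\cY\otimes\cX$, there exists $t>0$ and a separable operator $S\geq 0$ such that $P_i+tS$ is separable for $i=0,1$ (for instance one can absorb $P_i$ into a large ball around the maximally mixed product state, which lies in the interior of the set of separable operators). Since the construction only requires the \emph{difference} $J(\Psi_0)-J(\Psi_1)=c_{\Phi_\ta}(P_0-P_1)$ to reproduce $c_{\Phi_\ta}J(\Phi_\ta)$, I am free to add the \emph{same} operator to both $J(\Psi_0)$ and $J(\Psi_1)$ without changing their difference. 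The delicate point is to simultaneously satisfy the trace-preservation constraint $\Tr_\cY J(\Psi_i)=\I_{\cX}$, which fixes $\Tr_\cY\xi$; one therefore has to balance the separability-enforcing shift against the partial-trace budget, possibly at the cost of enlarging the output space $\cY$ to create room. I expect this balancing---making $\xi$ both separable-rendering and compatible with the fixed partial trace---to be the main technical obstacle.

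Having secured entanglement-breaking channels $\Psi_0,\Psi_1$ whose difference is still proportional to the same $\Phi_{\rm TA}$ as in Theorem~\ref{thm:main}, the rest of the argument carries over verbatim: the computation of $\|((\Psi_0-\Psi_1)\otimes\I_{\lin{\cZ}})[\tau]\|_{\rm tr}=2c_{\Phi_{\rm TA}}(1+N_{\Phi_\tp}(\tau))$ depends only on $\Psi_0-\Psi_1$ and not on the individual channels, so the strict inequality $\|((\Psi_0-\Psi_1)\otimes\I_{\lin{\cZ}})[\rho]\|_{\rm tr}-\|\Psi_0-\Psi_1\|_{\rm tr}=2c_{\Phi_{\rm TA}}N_{\Phi_\tp}(\rho)>0$ follows immediately for the entangled $\rho$. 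This reduces the corollary entirely to the separability upgrade of Lemma~\ref{lem:TA}, which is where I would concentrate the actual work.
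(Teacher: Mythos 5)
Your proposal correctly disposes of the forward direction and correctly identifies the key tool for the converse: the ball of separable states around the maximally mixed state~\cite{ZyczkowskiHSL98}. That is indeed the engine of the paper's proof, but the paper deploys it in a simpler place. It never reopens Lemma~\ref{lem:TA}: it takes the channels $\Psi_0,\Psi_1$ already delivered by Theorem~\ref{thm:main} and mixes them with the totally depolarizing channel $\Omega[X]=(\Tr(X)/d_\cY)\,\I_\cY$, setting $\Xi^p_i=p\,\Psi_i+(1-p)\,\Omega$. Since $\Xi^p_0-\Xi^p_1=p\,(\Psi_0-\Psi_1)$, both sides of the strict inequality scale by the same factor $p$, so the advantage of $\rho$ survives for every $p\in(0,1]$, while $J(\Xi^p_i)/d_\cX$ tends to the maximally mixed state as $p\to 0$, so the separable ball makes $\Xi^p_i$ entanglement breaking for all small $p$. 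This post-hoc mixing is also more general than what you attempt: it upgrades \emph{any} pair of channels that $\rho$ discriminates better than separable states do, regardless of how that pair was constructed (this is the generalization of~\cite{Sacchi05b} the paper invokes).

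Your implementation, by contrast, has a genuine gap at exactly the step you flag as the main obstacle, and the remedy you suggest---enlarging $\cY$---cannot close it. Adding dimensions to $\cY$ creates no room in the partial-trace budget, because the constraint $\Tr_\cY\xi=\I_\cX-c_{\Phi_\ta}Q$ sums over all of $\cY$, however large. Worse, with the lemma's choice $c_{\Phi_\ta}=1/\|Q\|$ that budget is a \emph{singular} positive operator, so any admissible $\xi\geq 0$ must vanish identically on $\cY\otimes\ker(\I_\cX-c_{\Phi_\ta}Q)$; in particular $\xi$ can never dominate a positive multiple of $\I_{\cY\otimes\cX}$, which is what your ball-based shift needs. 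In the paper's own transposition example the failure is total: there $Q=\tfrac{d_\cX+1}{2}\I_\cX$, the budget is exactly zero, $\xi=0$ is forced, and $J(\Psi_1)\propto \Pi_{\rm anti}+\proj{0}\otimes\I_\cX$ is not separable---so with $c_{\Phi_\ta}$ fixed there is no freedom whatsoever to exploit. The missing idea is to shrink the constant rather than enlarge the space: Lemma~\ref{lem:TA} remains valid for any $0<c'\leq 1/\|Q\|$, so fix $\alpha>0$, choose $c'>0$ with $\alpha d_\cY+c'\|Q\|\leq 1$, and set
\[
\xi \,=\, \alpha\,\I_{\cY\otimes\cX}+\frac{\I_\cY}{d_\cY}\otimes\bigl((1-\alpha d_\cY)\I_\cX-c'Q\bigr),
\qquad
J(\Psi_i)\,=\,c'P_i+\xi .
\]
Then $\Tr_\cY J(\Psi_i)=\I_\cX$, the second summand of $\xi$ is manifestly separable, and for $c'$ small enough relative to $\alpha$ the operator $c'P_i+\alpha\,\I_{\cY\otimes\cX}$ lies in the separable cone by the same ball argument, so each $\Psi_i$ is entanglement breaking; since $J(\Psi_0)-J(\Psi_1)=c'J(\Phi_\ta)$, the computation of Theorem~\ref{thm:main} goes through with $c'$ in place of $c_{\Phi_{\rm TA}}$. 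Note that this rescaling is precisely what the paper's mixing parameter $p$ accomplishes automatically; once you add it, your Choi-level construction becomes a valid, if more laborious, rendering of the same proof.
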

\begin{proof}
Generalizing the result of~\cite{Sacchi05b}, we observe that if an
entangled state $\rho\in\density{\cX\otimes\cZ}$ increases the
distinguishability of two channels $\Psi_0,\Psi_1\in\trans{\cX,\cY}$,
then it also increases the distinguishability of two entanglement
breaking channels of the form 
$\Xi_i^{p} = p \Psi_i + (1-p)\Omega$,
for $i=0,1$. 
Here $p\in(0,1]$ and $\Omega\in \trans{\cX,\cY}$ is the totally
depolarizing channel $\Omega[X]= (\Tr(X)/d_\cY) \I_{\cY}$. 

For sufficiently small $p>0$, the channels $\Xi_i^{p}$ are
entanglement breaking, as their Choi-Jamio{\l}kowski representations
are separable by the existence of a ball containing
only separable states around the maximally mixed
state~\cite{ZyczkowskiHSL98}.
It holds that
$\left\|\left((\Xi_0^{p}-\Xi_1^{p})\otimes\I_{\lin{\cZ}}\right)[\rho]\right\|_{\rm tr}
= p\|\left((\Psi_0-\Psi_1)\otimes\I_{\lin{\cZ}}\right)[\rho]\|_{\rm tr}$ 
and $\|\Xi_0^{p}-\Xi_1^{p}\|_{\mathrm{tr}}=p\|\Psi_0-\Psi_1\|_{\rm tr}$,
therefore the state $\rho$ enhances the
distinguishability of channels $\Xi_0^{p},\,\Xi_1^{p}$ for all choices of $p>0$.
\end{proof}

%--------------------------------------------------------------------------%
\section{Example}
%--------------------------------------------------------------------------%

The steps in the proof of Theorem~\ref{thm:main} are constructive. 
In particular, while the value of the enhancement in
distinguishability depends on the particular state, the channels that
are better distinguished by means of the states depend exclusively on
the positive map $\Phi_\tp$. We further remark that, for any entangled state, there exist tools to find a positive map that detects the state as entangled~\cite{DohertyPS04}. Unfortunately, it is not likely that this can efficiently be done~\cite{gurvits2003cdc,gharibian2008snh}.

The most well-known example of a positive linear map that detects
entanglement is transposition $T:L(\cX)\ni X \mapsto X^T\in L(\cX)$
with respect to some fixed basis of $\cX$~\cite{Peres96,HorodeckiHH96}.
For transposition one finds $c_{T}=2/(d_\cX+1)$, and channels
$\Psi_0,\Psi_1\in \trans{\cX,\cX\oplus \mathbb{C}}$,  
$\Psi_0:X\mapsto\frac{1}{d_\cX+1}((\Tr X) \I_\cX + X^T)$,
$\Psi_1:X\mapsto\frac{1}{d_\cX+1}((\Tr X) (\I_\cX + 2 \proj{0})- X^T)$,
with $\ket{0}$ a normalized vector orthogonal to $\cX$. Thus, for any
state $\rho\in D(\cX\otimes\cZ)$, we obtain
$\|\Psi_0\otimes\I_{\lin{\cZ}}[\rho]
-\Psi_1\otimes\I_{\lin{\cZ}}[\rho]\|_{\rm
  tr}-\|\Psi_0-\Psi_1\|_{\rm tr}=\frac{4}{d_\cX+1}N_T(\rho)$, with
$N_T(\rho)$ the standard negativity of
$\rho$~\cite{ZyczkowskiHSL98,VidalW02}.

%--------------------------------------------------------------------------%
\section{Conclusions}
%--------------------------------------------------------------------------%

We have proved that any entangled state is useful to distinguish some
pair of (entanglement-breaking) channels strictly better than what is possible by means of
a separable state in the minimum-error, single-shot scenario.
%This pair of channels may be taken to be arbitrarily noisy and able to
%destroy the entanglement of any arbitrary strongly entangled state. 
%This pair of channels may be taken to be able to
%destroy the entanglement of any entangled state. 
One may consider this result as a physically meaningful interpretation
of the characterization of entangled states by means of positive but
not completely positive linear maps~\cite{HorodeckiHH96}. We expect that our result
will stimulate further investigations on the role of entanglement in the discrimination of physical processes.

MP thanks C. E. Mora, P. Stelmachovic and M. Ziman for inspiring discussions.
We acknowledge support from NSERC, CIFAR, Quantumworks, and Ontario Centres of Excellence. 

%--------------------------------------------------------------------------%
%\bibliography{entdiscrim_v2}

\begin{thebibliography}{38}
\expandafter\ifx\csname natexlab\endcsname\relax\def\natexlab#1{#1}\fi
\expandafter\ifx\csname bibnamefont\endcsname\relax
  \def\bibnamefont#1{#1}\fi
\expandafter\ifx\csname bibfnamefont\endcsname\relax
  \def\bibfnamefont#1{#1}\fi
\expandafter\ifx\csname citenamefont\endcsname\relax
  \def\citenamefont#1{#1}\fi
\expandafter\ifx\csname url\endcsname\relax
  \def\url#1{\texttt{#1}}\fi
\expandafter\ifx\csname urlprefix\endcsname\relax\def\urlprefix{URL }\fi
\providecommand{\bibinfo}[2]{#2}
\providecommand{\eprint}[2][]{\url{#2}}

\bibitem[{\citenamefont{Jozsa and Linden}(2003)}]{JoszaL03}
\bibinfo{author}{\bibfnamefont{R.}~\bibnamefont{Jozsa}} \bibnamefont{and}
  \bibinfo{author}{\bibfnamefont{N.}~\bibnamefont{Linden}},
  \bibinfo{journal}{Proc. R. Soc. A} \textbf{\bibinfo{volume}{459}},
  \bibinfo{pages}{2011} (\bibinfo{year}{2003}).

\bibitem[{\citenamefont{Bennett et~al.}(1993)}]{BennettBCJPW93}
\bibinfo{author}{\bibfnamefont{C.~H.} \bibnamefont{Bennett}}
  \bibnamefont{et~al.}, \bibinfo{journal}{Phys. Rev. Lett.}
  \textbf{\bibinfo{volume}{70}}, \bibinfo{pages}{1895} (\bibinfo{year}{1993}).

\bibitem[{\citenamefont{Bennett and Wiesner}(1992)}]{BennettW92}
\bibinfo{author}{\bibfnamefont{C.~H.} \bibnamefont{Bennett}} \bibnamefont{and}
  \bibinfo{author}{\bibfnamefont{S.~J.} \bibnamefont{Wiesner}},
  \bibinfo{journal}{Phys. Rev. Lett.} \textbf{\bibinfo{volume}{69}},
  \bibinfo{pages}{2881} (\bibinfo{year}{1992}).

\bibitem[{\citenamefont{Raussendorf and Briegel}(2001)}]{RaussendorfB01}
\bibinfo{author}{\bibfnamefont{R.}~\bibnamefont{Raussendorf}} \bibnamefont{and}
  \bibinfo{author}{\bibfnamefont{H.~J.} \bibnamefont{Briegel}},
  \bibinfo{journal}{Phys. Rev. Lett.} \textbf{\bibinfo{volume}{86}},
  \bibinfo{pages}{5188} (\bibinfo{year}{2001}).

\bibitem[{\citenamefont{Bennett et~al.}(1996)}]{BennettBPSSW96}
\bibinfo{author}{\bibfnamefont{C.~H.} \bibnamefont{Bennett}}
  \bibnamefont{et~al.}, \bibinfo{journal}{Phys. Rev. Lett.}
  \textbf{\bibinfo{volume}{76}}, \bibinfo{pages}{722} (\bibinfo{year}{1996}).

\bibitem[{\citenamefont{Horodecki et~al.}(1998)\citenamefont{Horodecki,
  Horodecki, and Horodecki}}]{HorodeckiHH98}
\bibinfo{author}{\bibfnamefont{M.}~\bibnamefont{Horodecki}},
  \bibinfo{author}{\bibfnamefont{P.}~\bibnamefont{Horodecki}},
  \bibnamefont{and}
  \bibinfo{author}{\bibfnamefont{R.}~\bibnamefont{Horodecki}},
  \bibinfo{journal}{Phys. Rev. Lett.} \textbf{\bibinfo{volume}{80}},
  \bibinfo{pages}{5239} (\bibinfo{year}{1998}).

\bibitem[{\citenamefont{Bruss et~al.}(2004)}]{BrussDLMSS04}
\bibinfo{author}{\bibfnamefont{D.}~\bibnamefont{Bruss}} \bibnamefont{et~al.},
  \bibinfo{journal}{Phys. Rev. Lett.} \textbf{\bibinfo{volume}{93}},
  \bibinfo{pages}{210501} (\bibinfo{year}{2004}).

\bibitem[{\citenamefont{Horodecki and Piani}(2007)}]{HododeckiP07}
\bibinfo{author}{\bibfnamefont{M.}~\bibnamefont{Horodecki}} \bibnamefont{and}
  \bibinfo{author}{\bibfnamefont{M.}~\bibnamefont{Piani}},
  \bibinfo{howpublished}{arXiv.org e-Print quant-ph/0701134}
  (\bibinfo{year}{2007}).

\bibitem[{\citenamefont{Horodecki et~al.}(2005)}]{HorodeckiHHO05}
\bibinfo{author}{\bibfnamefont{K.}~\bibnamefont{Horodecki}}
  \bibnamefont{et~al.}, \bibinfo{journal}{Phys. Rev. Lett.}
  \textbf{\bibinfo{volume}{94}}, \bibinfo{pages}{160502}
  (\bibinfo{year}{2005}).

\bibitem[{\citenamefont{Masanes}(2006)}]{Masanes05}
\bibinfo{author}{\bibfnamefont{L.}~\bibnamefont{Masanes}},
  \bibinfo{journal}{Phys. Rev. Lett.} \textbf{\bibinfo{volume}{96}},
  \bibinfo{pages}{150501} (\bibinfo{year}{2006}).

\bibitem[{\citenamefont{Horodecki et~al.}(1999)\citenamefont{Horodecki,
  Horodecki, and Horodecki}}]{HorodeckiHH99}
\bibinfo{author}{\bibfnamefont{P.}~\bibnamefont{Horodecki}},
  \bibinfo{author}{\bibfnamefont{M.}~\bibnamefont{Horodecki}},
  \bibnamefont{and}
  \bibinfo{author}{\bibfnamefont{R.}~\bibnamefont{Horodecki}},
  \bibinfo{journal}{Phys. Rev. Lett.} \textbf{\bibinfo{volume}{82}},
  \bibinfo{pages}{1056} (\bibinfo{year}{1999}).

\bibitem[{\citenamefont{Masanes et~al.}(2008)\citenamefont{Masanes, Liang, and
  Doherty}}]{MasanesLD08}
\bibinfo{author}{\bibfnamefont{L.}~\bibnamefont{Masanes}},
  \bibinfo{author}{\bibfnamefont{Y.-C.} \bibnamefont{Liang}}, \bibnamefont{and}
  \bibinfo{author}{\bibfnamefont{A.~C.} \bibnamefont{Doherty}},
  \bibinfo{journal}{Phys. Rev. Lett.} \textbf{\bibinfo{volume}{100}},
  \bibinfo{pages}{090403} (\bibinfo{year}{2008}).

\bibitem[{\citenamefont{Kitaev}(1997)}]{Kitaev97}
\bibinfo{author}{\bibfnamefont{A.}~\bibnamefont{Kitaev}},
  \bibinfo{journal}{Russ. Math. Surv.} \textbf{\bibinfo{volume}{52}},
  \bibinfo{pages}{1191} (\bibinfo{year}{1997}).

\bibitem[{\citenamefont{Childs et~al.}(2000)\citenamefont{Childs, Preskill, and
  Renes}}]{ChildsPR00}
\bibinfo{author}{\bibfnamefont{A.}~\bibnamefont{Childs}},
  \bibinfo{author}{\bibfnamefont{J.}~\bibnamefont{Preskill}}, \bibnamefont{and}
  \bibinfo{author}{\bibfnamefont{J.}~\bibnamefont{Renes}}, \bibinfo{journal}{J.
  Mod. Opt.} \textbf{\bibinfo{volume}{47}}, \bibinfo{pages}{155}
  (\bibinfo{year}{2000}).

\bibitem[{\citenamefont{D'Ariano et~al.}(2001)\citenamefont{D'Ariano, LoPresti,
  and Paris}}]{D'ArianoPP01}
\bibinfo{author}{\bibfnamefont{G.~M.} \bibnamefont{D'Ariano}},
  \bibinfo{author}{\bibfnamefont{P.}~\bibnamefont{LoPresti}}, \bibnamefont{and}
  \bibinfo{author}{\bibfnamefont{M.~G.~A.} \bibnamefont{Paris}},
  \bibinfo{journal}{Phys. Rev. Lett.} \textbf{\bibinfo{volume}{87}},
  \bibinfo{pages}{270404} (\bibinfo{year}{2001}).

\bibitem[{\citenamefont{Acin}(2001)}]{Acin01}
\bibinfo{author}{\bibfnamefont{A.}~\bibnamefont{Acin}}, \bibinfo{journal}{Phys.
  Rev. Lett.} \textbf{\bibinfo{volume}{87}}, \bibinfo{pages}{177901}
  (\bibinfo{year}{2001}).

\bibitem[{\citenamefont{Giovannetti et~al.}(2004)\citenamefont{Giovannetti,
  Lloyd, and Maccone}}]{GiovannettiLM04}
\bibinfo{author}{\bibfnamefont{V.}~\bibnamefont{Giovannetti}},
  \bibinfo{author}{\bibfnamefont{S.}~\bibnamefont{Lloyd}}, \bibnamefont{and}
  \bibinfo{author}{\bibfnamefont{L.}~\bibnamefont{Maccone}},
  \bibinfo{journal}{Science} \textbf{\bibinfo{volume}{306}},
  \bibinfo{pages}{1330} (\bibinfo{year}{2004}).

\bibitem[{\citenamefont{Gilchrist et~al.}(2005)\citenamefont{Gilchrist,
  Langford, and Nielsen}}]{GilchristLN05}
\bibinfo{author}{\bibfnamefont{A.}~\bibnamefont{Gilchrist}},
  \bibinfo{author}{\bibfnamefont{N.~K.} \bibnamefont{Langford}},
  \bibnamefont{and} \bibinfo{author}{\bibfnamefont{M.~A.}
  \bibnamefont{Nielsen}}, \bibinfo{journal}{Phys. Rev. A}
  \textbf{\bibinfo{volume}{71}}, \bibinfo{pages}{062310}
  (\bibinfo{year}{2005}).

\bibitem[{\citenamefont{Rosgen and Watrous}(2005)}]{RosgenW05}
\bibinfo{author}{\bibfnamefont{B.}~\bibnamefont{Rosgen}} \bibnamefont{and}
  \bibinfo{author}{\bibfnamefont{J.}~\bibnamefont{Watrous}}, in
  \emph{\bibinfo{booktitle}{Proc. 20th Ann. Conf. Comp. Compl.}}
  (\bibinfo{year}{2005}), pp. \bibinfo{pages}{344--354}.

\bibitem[{\citenamefont{Sacchi}(2005{\natexlab{a}})}]{Sacchi05}
\bibinfo{author}{\bibfnamefont{M.~F.} \bibnamefont{Sacchi}},
  \bibinfo{journal}{Phys. Rev. A} \textbf{\bibinfo{volume}{71}},
  \bibinfo{pages}{062340} (\bibinfo{year}{2005}{\natexlab{a}}).

\bibitem[{\citenamefont{Sacchi}(2005{\natexlab{b}})}]{Sacchi05b}
\bibinfo{author}{\bibfnamefont{M.~F.} \bibnamefont{Sacchi}},
  \bibinfo{journal}{Phys. Rev. A} \textbf{\bibinfo{volume}{72}},
  \bibinfo{pages}{014305} (\bibinfo{year}{2005}{\natexlab{b}}).

\bibitem[{\citenamefont{Lloyd}(2008)}]{Lloyd08}
\bibinfo{author}{\bibfnamefont{S.}~\bibnamefont{Lloyd}},
  \bibinfo{journal}{Science} \textbf{\bibinfo{volume}{321}},
  \bibinfo{pages}{1463} (\bibinfo{year}{2008}).

\bibitem[{\citenamefont{Rosgen}(2008)}]{Rosgen08}
\bibinfo{author}{\bibfnamefont{B.}~\bibnamefont{Rosgen}}, in
  \emph{\bibinfo{booktitle}{Proc. 25th Int. Symp. Th. Asp. Comp. Sc.}}
  (\bibinfo{year}{2008}), pp. \bibinfo{pages}{597--608}.

\bibitem[{\citenamefont{Watrous}(2008)}]{Watrous08}
\bibinfo{author}{\bibfnamefont{J.}~\bibnamefont{Watrous}},
  \bibinfo{journal}{Quant. Inf. Comp.} \textbf{\bibinfo{volume}{8}},
  \bibinfo{pages}{819} (\bibinfo{year}{2008}).

\bibitem[{\citenamefont{Jamio{\l}kowski}(1972)}]{Jamiolkowski72}
\bibinfo{author}{\bibfnamefont{A.}~\bibnamefont{Jamio{\l}kowski}},
  \bibinfo{journal}{Rep. Math. Phys.} \textbf{\bibinfo{volume}{3}},
  \bibinfo{pages}{275} (\bibinfo{year}{1972}).

\bibitem[{\citenamefont{Choi}(1975)}]{Choi75}
\bibinfo{author}{\bibfnamefont{M.-D.} \bibnamefont{Choi}},
  \bibinfo{journal}{Lin. Alg. Appl.} \textbf{\bibinfo{volume}{10}},
  \bibinfo{pages}{285} (\bibinfo{year}{1975}).

\bibitem[{\citenamefont{de~Pillis}(1967)}]{dePillis67}
\bibinfo{author}{\bibfnamefont{J.}~\bibnamefont{de~Pillis}},
  \bibinfo{journal}{Pac. J. Math.} \textbf{\bibinfo{volume}{23}},
  \bibinfo{pages}{129} (\bibinfo{year}{1967}).

\bibitem[{\citenamefont{Horodecki et~al.}(2003)\citenamefont{Horodecki, Shor,
  and Ruskai}}]{HorodeckiSR03}
\bibinfo{author}{\bibfnamefont{M.}~\bibnamefont{Horodecki}},
  \bibinfo{author}{\bibfnamefont{P.}~\bibnamefont{Shor}}, \bibnamefont{and}
  \bibinfo{author}{\bibfnamefont{M.}~\bibnamefont{Ruskai}},
  \bibinfo{journal}{Rev. Math. Phys.} \textbf{\bibinfo{volume}{15}},
  \bibinfo{pages}{629} (\bibinfo{year}{2003}).

\bibitem[{\citenamefont{Helstrom}(1969)}]{Helstrom69}
\bibinfo{author}{\bibfnamefont{C.}~\bibnamefont{Helstrom}},
  \bibinfo{journal}{J. Stat. Phys.} \textbf{\bibinfo{volume}{1}},
  \bibinfo{pages}{231} (\bibinfo{year}{1969}).

\bibitem[{\citenamefont{Horodecki et~al.}(2006)\citenamefont{Horodecki,
  Horodecki, and Horodecki}}]{HorodeckiHH06}
\bibinfo{author}{\bibfnamefont{M.}~\bibnamefont{Horodecki}},
  \bibinfo{author}{\bibfnamefont{P.}~\bibnamefont{Horodecki}},
  \bibnamefont{and}
  \bibinfo{author}{\bibfnamefont{R.}~\bibnamefont{Horodecki}},
  \bibinfo{journal}{Open Sys. Inf. Dyn.} \textbf{\bibinfo{volume}{13}},
  \bibinfo{pages}{103} (\bibinfo{year}{2006}).

\bibitem[{\citenamefont{Horodecki et~al.}(1996)\citenamefont{Horodecki,
  Horodecki, and Horodecki}}]{HorodeckiHH96}
\bibinfo{author}{\bibfnamefont{M.}~\bibnamefont{Horodecki}},
  \bibinfo{author}{\bibfnamefont{P.}~\bibnamefont{Horodecki}},
  \bibnamefont{and}
  \bibinfo{author}{\bibfnamefont{R.}~\bibnamefont{Horodecki}},
  \bibinfo{journal}{Phys. Lett. A} \textbf{\bibinfo{volume}{223}},
  \bibinfo{pages}{1} (\bibinfo{year}{1996}).

\bibitem[{\citenamefont{Zyczkowski et~al.}(1998)}]{ZyczkowskiHSL98}
\bibinfo{author}{\bibfnamefont{K.}~\bibnamefont{Zyczkowski}}
  \bibnamefont{et~al.}, \bibinfo{journal}{Phys. Rev. A}
  \textbf{\bibinfo{volume}{58}}, \bibinfo{pages}{883} (\bibinfo{year}{1998}).

\bibitem[{\citenamefont{Vidal and Werner}(2002)}]{VidalW02}
\bibinfo{author}{\bibfnamefont{G.}~\bibnamefont{Vidal}} \bibnamefont{and}
  \bibinfo{author}{\bibfnamefont{R.~F.} \bibnamefont{Werner}},
  \bibinfo{journal}{Phys. Rev. A} \textbf{\bibinfo{volume}{65}},
  \bibinfo{pages}{032314} (\bibinfo{year}{2002}).

\bibitem[{\citenamefont{Doherty et~al.}(2004)\citenamefont{Doherty, Parrilo,
  and Spedalieri}}]{DohertyPS04}
\bibinfo{author}{\bibfnamefont{A.~C.} \bibnamefont{Doherty}},
  \bibinfo{author}{\bibfnamefont{P.~A.} \bibnamefont{Parrilo}},
  \bibnamefont{and} \bibinfo{author}{\bibfnamefont{F.~M.}
  \bibnamefont{Spedalieri}}, \bibinfo{journal}{Phys. Rev. A}
  \textbf{\bibinfo{volume}{69}}, \bibinfo{pages}{022308}
  (\bibinfo{year}{2004}).

\bibitem[{\citenamefont{Gurvits}(2003)}]{gurvits2003cdc}
\bibinfo{author}{\bibfnamefont{L.}~\bibnamefont{Gurvits}}, in
  \emph{\bibinfo{booktitle}{Proc. 35th Ann. ACM Symp. Th. Comp.}}
  (\bibinfo{year}{2003}), pp. \bibinfo{pages}{10--19}.

\bibitem[{\citenamefont{Gharibian}(2008)}]{gharibian2008snh}
\bibinfo{author}{\bibfnamefont{S.}~\bibnamefont{Gharibian}},
  \bibinfo{journal}{preprint arXiv:0810.4507}  (\bibinfo{year}{2008}).

\bibitem[{\citenamefont{Peres}(1996)}]{Peres96}
\bibinfo{author}{\bibfnamefont{A.}~\bibnamefont{Peres}},
  \bibinfo{journal}{Phys. Rev. Lett.} \textbf{\bibinfo{volume}{77}},
  \bibinfo{pages}{1413} (\bibinfo{year}{1996}).

\bibitem[{\citenamefont{Paulsen}(2002)}]{Paulsen02}
\bibinfo{author}{\bibfnamefont{V.}~\bibnamefont{Paulsen}},
  \emph{\bibinfo{title}{Completely Bounded Maps and Operator Algebras}},
  Cambridge Studies in Advanced Mathematics (\bibinfo{publisher}{Cambridge
  University Press}, \bibinfo{year}{2002}).

\end{thebibliography}
%--------------------------------------------------------------------------%

\end{document}